\begin{document}

\newenvironment{proof}[1][Proof]{\textbf{#1.} }{\ \rule{0.5em}{0.5em}}

\newtheorem{theorem}{Theorem}[section]
\newtheorem{definition}[theorem]{Definition}
\newtheorem{lemma}[theorem]{Lemma}
\newtheorem{remark}[theorem]{Remark}
\newtheorem{proposition}[theorem]{Proposition}
\newtheorem{corollary}[theorem]{Corollary}
\newtheorem{example}[theorem]{Example}

\numberwithin{equation}{section}
\newcommand{\ep}{\varepsilon}
\newcommand{\R}{{\mathbb  R}}
\newcommand\C{{\mathbb  C}}
\newcommand\Q{{\mathbb Q}}
\newcommand\Z{{\mathbb Z}}
\newcommand{\N}{{\mathbb N}}

\newcommand{\bfi}{\bfseries\itshape}

\newsavebox{\savepar}
\newenvironment{boxit}{\begin{lrbox}{\savepar}
\begin{minipage}[b]{15.5cm}}{\end{minipage}\end{lrbox}
\fbox{\usebox{\savepar}}}

\title{{\bf On local generators of affine distributions on Riemannian manifolds}}
\author{R\u{a}zvan M. Tudoran}

\date{}
\maketitle \makeatother

\begin{abstract}
Using a coordinate free characterization of hyperplanes intersection, we provide explicitly a set of local generators for a smooth affine distribution given by those smooth vector fields $X\in\mathfrak{X}(U)$ defined eventually on an open subset $U\subseteq M$ of a smooth Riemannian manifold $(M,g)$, that verifies the relations $g(X,X_1)=\dots =g(X,X_k)=0$, $g(X,Y_1)=h_1, \dots, g(X,Y_p)=h_p$, where $X_1,\dots, X_k, Y_1, \dots, Y_p \in\mathfrak{X}(U)$, and respectively $h_1,\dots, h_p \in \mathcal{C}^{\infty}(U,\mathbb{R})$, are a-priori given quantities. 

In the case when $X_1,\dots, X_k, Y_1, \dots, Y_p$ are gradient vector fields associated with some smooth functions $I_1,\dots, I_k, D_1, \dots, D_p \in \mathcal{C}^{\infty}(U,\mathbb{R})$, i.e., $X_1 =\nabla_{g}I_1$, $\dots$, $X_k = \nabla_{g}I_k$, $Y_1 = \nabla_{g}D_1$, $\dots, Y_p = \nabla_{g}D_p$, then we obtain a set of local generators for the smooth affine distribution of smooth vector fields which conserve the quantities $I_1, \dots, I_k$ and dissipate the scalar quantities $D_1, \dots, D_p$ with prescribed rates $h_1, \dots, h_p$.
\end{abstract}

\medskip

\textbf{AMS 2000}: 58A30; 53B21; 37C10.

\textbf{Keywords}: affine distributions; Riemannian manifolds; contravariant Grassmann algebra; conservative/dissipative dynamics.

\section{Introduction}
\label{section:one}

The theory of distributions, in the broadest sense, it is perhaps one of the most influential tool from differential geometry, according to its usefulness in a large assortment of scientific domains, e.g., geometric control theory, differential equations, sub-Riemannian geometry, dynamical systems, nonholonomic mechanics (for details see, e.g., \cite{Boothby}, \cite{stefan}, \cite{susman}, \cite{Calin}, \cite{Bloch}, \cite{ratiurazvan}). 

The main protagonist of this paper is a special class of distributions, namely the so called smooth affine distributions on Riemannian manifolds. More precisely, the main purpose of this work is to provide explicitly (and coordinate free) a set of local generators for a given smooth affine distribution on a finite dimensional smooth Riemannian manifold. Moreover, by applying this result to some special classes of affine distributions, we generalize some results from \cite{BC} related to smooth affine distributions associated to dissipative dynamical systems.

More exactly, in the second section we provide a coordinate free formulation for the intersection of a finite number of hyperplanes of a finite dimensional inner product space (Euclidean vector space).

The third section is dedicated to the local study of smooth affine distributions on Riemannian manifolds. Since the main applications of our results are supposed to improve the study of conservative/dissipative dynamical systems, for practical reasons, we will focus on the local study of smooth affine distributions. More precisely, the main purpose of this section is to provide explicitly (and coordinate free) a set of local generators for a smooth affine distribution given by those vector fields $X\in\mathfrak{X}(U)$ defined eventually on an open subset $U\subseteq M$ of a smooth Riemannian manifold $(M,g)$, that verifies the relations $g(X,X_1)=\dots =g(X,X_k)=0$, $g(X,Y_1)=h_1, \dots, g(X,Y_p)=h_p$, where $X_1,\dots, X_k, Y_1, \dots, Y_p \in\mathfrak{X}(U)$, and respectively $h_1,\dots, h_p \in \mathcal{C}^{\infty}(U,\mathbb{R})$ are a-priori given quantities. The analysis of mixed homogeneous and nonhomogeneous relations is deliberate (even if one can recover the homogeneous part by simply annihilating the nonhomogeneous one) because of the clarity of formulas for the local generators. Moreover, these two classes of relations have completely different meaning in dynamical setting, as can be seen in the next section.

The aim of the last section is to apply the results obtained in the previous section to some special classes of smooth affine distributions naturally associated to dynamical systems, and also to provide a unified presentation of conservative and dissipative dynamical systems. More exactly, in the case when $X_1,\dots, X_k, Y_1, \dots, Y_p$ are gradient vector fields on a smooth Riemannian manifold $(M,g)$ associated with some smooth functions $I_1,\dots, I_k, D_1, \dots, D_p \in \mathcal{C}^{\infty}(U,\mathbb{R})$, i.e., $X_1 =\nabla_{g}I_1$, $\dots$, $X_k = \nabla_{g}I_k$, $Y_1 = \nabla_{g}D_1$, $\dots, Y_p = \nabla_{g}D_p$, then we obtain a set of local generators for the smooth affine distribution of vector fields which conserve the quantities $I_1, \dots, I_k$ and dissipate the scalar quantities $D_1, \dots, D_p$ with prescribed rates $h_1, \dots, h_p$. As a consequence one obtain a generalization for $p>1$ of a result from \cite{BC} given for $p=1$. Note that for $p>0$ the main result provides a local characterization of dissipative dynamical systems. Some other dynamically relevant cases are obtained, e.g., for $p=0$ one obtain a local characterization of conservative dynamical systems; for $p=0$ and $k=\dim{M}-1$ one obtain a local characterization of completely integrable dynamical systems.

\section{A coordinate free formulation of hyperplanes intersection}

In this section we obtain a coordinate free formulation for the linear variety determined by the intersection of a finite number of hyperplanes of a finite dimensional inner product space.

In order to do that, let us recall that given an $n$-dimensional inner product space $(E,\langle\cdot,\cdot\rangle)$ over a field $\mathbb{K}$ of characteristic zero, then for any $p\in\{1,\dots, n\}$, the $p$-th exterior power of the vector space $E$, $\Lambda^{p}E$, inherits an inner product, $\langle\cdot,\cdot\rangle_{p}$, defined on pairs of decomposable $p$-vectors by
\begin{equation}\label{ipmv} 
\langle v_1 \wedge \dots \wedge v_p,  w_1 \wedge \dots \wedge w_p \rangle_{p} := \det[\langle v_i,w_j \rangle_{1\leq i,j\leq p}],
\end{equation}
for any $v_1 \wedge \dots \wedge v_p,  w_1 \wedge \dots \wedge w_p \in \Lambda^{p}E$, and extended by bilinearity to the whole vector space $\Lambda^{p}E$. Note that $(\Lambda^{1}E, \langle\cdot,\cdot\rangle_{1})=(E,\langle\cdot,\cdot\rangle)$, and by convention $\Lambda^{0}E=\mathbb{K}$.

As usual, one denote by $\| \cdot \|_{p} =\sqrt{\langle \cdot, \cdot \rangle_{p}}$, the norm induced by the inner product $\langle\cdot,\cdot\rangle_{p}$. Note that in the case of the inner product $\langle\cdot,\cdot\rangle_{p}$, $$\| v_1 \wedge \dots \wedge v_p \|_{p}= \sqrt{\det(G(v_1,\dots, v_p))},$$ where $G(v_1,\dots, v_p)=[\langle v_i,v_j \rangle_{1\leq i,j\leq p}]$ is the Gram matrix associated to the ordered set of vectors $\{v_1, \dots, v_p\}\subset E$.

Recall that any orthonormal basis of $E$, $\{e_1,\dots,e_n\}$, generates an orthonormal basis of $\Lambda^{p}E$, $$\{e_{i_1}\wedge\dots\wedge e_{i_p} \mid 1\leq i_1 \leq \dots \leq i_p \leq n\}.$$

Consequently, for any $p\in\{0,\dots,n\}$, we have $$\dim_{\mathbb{K}}\Lambda^{p}E = \binom{n}{p}=\binom{n}{n-p}=\dim_{\mathbb{K}}\Lambda^{n-p}E,$$ and hence $\Lambda^{p}E\cong \Lambda^{n-p}E$. A natural isomorphism between these vector spaces is given by the Hodge star operator. 

In order to remind the definition of the Hodge star operator let us fix an orthonormal basis of the vector space $E$, say $\{e_1,\dots, e_n\}$, and the corresponding  basis unit vector $\mu = e_1 \wedge\dots\wedge e_n$ for the vector space $\Lambda^{n}E$. Note that since the basis $\{e_1,\dots, e_n\}$ is orthonormal, we get that $\|\mu\|_{n}=1$. The volume element $\mu$ is hence unique up to a sign and defines an orientation of $E$.

Fixing $p\in\{1,\dots, n\}$ and $\nu \in \Lambda^{p}E$, we get that the map $$ \omega\in\Lambda^{n-p}E \mapsto \nu\wedge\omega\in\Lambda^{n}E,$$ is linear, and hence there exists a unique linear functional $\alpha_{\nu}\in\left(\Lambda^{n-p}E\right)^{\star}$ such that $$\nu\wedge\omega=\alpha_{\nu}(\omega)\mu.$$ Since $\Lambda^{n-p}E$ is an inner product space, due to Riesz representation of linear functionals, we have the existence of a unique element of $\Lambda^{n-p}E$, denoted $\star \nu$, such that for any $\omega\in \Lambda^{n-p}E$,  $$\alpha_{\nu}(\omega)=\langle \star\nu,\omega\rangle_{n-p}.$$ 

Hence, for any $\nu\in\Lambda^{p}E$ and $\omega\in\Lambda^{n-p}E$, we have
\begin{equation}\label{Hs}
\langle \star\nu,\omega\rangle_{n-p} \quad \mu=\nu\wedge\omega.
\end{equation}

Summarizing, the linear operator $\nu \in \Lambda^{p}E \mapsto \star \nu \in \Lambda^{n-p}E$, is by construction an isomorphism of vector spaces, and is called the Hodge star operator. Moreover, a direct consequence of the property \eqref{Hs} and of the fact that $(\star\circ\star) (\nu) = (-1)^{p(n-p)}\nu$, for any $\nu\in\Lambda^{p}E$, is that $$\langle \star \nu_1 , \star \nu_2 \rangle_{n-p}=\langle \nu_1 , \nu_2 \rangle_{p},$$
for any $\nu_1 ,\nu_2 \in\Lambda^{p}E$.

For more details regarding the Hodge star operator and other properties of $p$-th exterior powers of vector spaces, see e.g., \cite{darling}.

Before stating the main result of this section, let us give an auxiliary result that provides a coordinate free formulation for the intersection of $k$ linear hyperplanes with prescribed normal directions, in a finite dimensional inner product space.

\begin{proposition}\label{aux}
Let $(E,\langle\cdot,\cdot\rangle)$ be an $n$-dimensional inner product space over a field $\mathbb{K}$ of characteristic zero, and let $\{v_1,\dots, v_k\}\subset E$ be a set of linearly independent vectors ($k\in \mathbb{N}$, $0<k<n-1$). 
Then the solutions $u\in E$ of the system
\begin{equation*}
\langle u, v_1\rangle = \dots = \langle u, v_k\rangle = 0,
\end{equation*}
are the elements of the $(n-k)$-dimensional vector subspace 
$$
E[v_1,\dots,v_k]:=\operatorname{span}_{\mathbb{K}}\left\{\star\left( \bigwedge_{i=1, i\neq a}^{n-k} \omega_i \wedge \bigwedge_{l=1}^{k} v_l \right): a\in\{1,\dots, n-k\}\right\},
$$
where $$\{\omega_1, \dots, \omega_{n-k}\} \subset E $$ is a set of linearly independent vectors such that $\{v_1,\dots, v_k, \omega_1, \dots, \omega_{n-k}\}$ forms a basis of $E$.
\end{proposition}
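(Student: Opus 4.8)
The plan is to identify the solution set of the system with the orthogonal complement $V^{\perp}$ of $V := \operatorname{span}_{\mathbb{K}}\{v_1,\dots,v_k\}$, and then to show that the listed generators form a basis of $V^{\perp}$. Since $\langle\cdot,\cdot\rangle$ is nondegenerate and the $v_i$ are linearly independent, I would first record that $\dim V = k$, whence $\dim V^{\perp} = n-k$. Because $E[v_1,\dots,v_k]$ is spanned by exactly $n-k$ vectors, it then suffices to prove two things: that each generator lies in $V^{\perp}$, and that the $n-k$ generators are linearly independent.

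To prove the first, I would abbreviate $W_a := \bigwedge_{i=1,\, i\neq a}^{n-k}\omega_i \wedge \bigwedge_{l=1}^{k}v_l \in \Lambda^{n-1}E$, so that the $a$-th generator is $\star W_a \in \Lambda^{1}E = E$. Applying the defining relation \eqref{Hs} with $\nu = W_a$ (of degree $n-1$) and $\omega = v_l$ (of degree $1$) yields $\langle \star W_a, v_l\rangle\,\mu = W_a \wedge v_l$. The key point is that $W_a$ already contains $v_l$ as a factor for every $l \in \{1,\dots,k\}$, so $W_a \wedge v_l$ has a repeated factor and vanishes; since $\mu \neq 0$ this forces $\langle \star W_a, v_l\rangle = 0$. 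Hence every $\star W_a$ annihilates each $v_l$, which gives the inclusion $E[v_1,\dots,v_k] \subseteq V^{\perp}$ and, simultaneously, verifies that the generators solve the system.

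For the independence I would use that $\star$ is a linear isomorphism, so a relation $\sum_{a} c_a \star W_a = 0$ is equivalent to $\sum_{a} c_a W_a = 0$ in $\Lambda^{n-1}E$. Regarding $\{v_1,\dots,v_k,\omega_1,\dots,\omega_{n-k}\}$ as a single basis of $E$, each $W_a$ is, up to a reordering of its factors and hence a sign, the wedge of all $n$ basis vectors with $\omega_a$ omitted. As $a$ runs over $\{1,\dots,n-k\}$ these are $n-k$ pairwise distinct members of the induced basis of $\Lambda^{n-1}E$, so they are linearly independent and all $c_a$ must vanish; consequently the $\star W_a$ are linearly independent as well.

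Combining the two steps, $E[v_1,\dots,v_k]$ is an $(n-k)$-dimensional subspace sitting inside the $(n-k)$-dimensional space $V^{\perp}$, so the two coincide and equal the solution set of the system. I expect the only delicate point to be the bookkeeping of degrees in the Hodge identity \eqref{Hs} during the orthogonality step, together with the appeal to nondegeneracy of $\langle\cdot,\cdot\rangle$ needed to guarantee $\dim V^{\perp} = n-k$; once those are in place, the independence argument reduces to the observation that the $W_a$ are distinct basis elements of $\Lambda^{n-1}E$.
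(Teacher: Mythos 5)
Your proof is correct and follows essentially the same route as the paper: both reduce to a dimension count plus checking that the $n-k$ generators solve the system (via the identity \eqref{Hs} and the repeated-factor vanishing $W_a\wedge v_l=0$) and are linearly independent. The only cosmetic difference is in the independence step, where you observe that the $W_a$ are distinct induced basis elements of $\Lambda^{n-1}E$, while the paper wedges a vanishing linear combination with each $\omega_r$ to isolate the coefficients; the two arguments are interchangeable.
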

\begin{proof}
Recall first that the intersection of $k$ linear hyperplanes with linearly independent normal directions, of an $n$ dimensional vector space is a vector subspace of dimension $n-k$. Hence, in order to obtain the conclusion will be enough to find a set of $n-k$ linearly independent solutions of the system $\langle u, v_1\rangle = \dots = \langle u, v_k\rangle = 0$ which also belong to $E[v_1,\dots,v_k]$. Note that the vector space $E[v_1,\dots,v_k]$ is independent of the choice of the set of linear independent vectors $\{\omega_1, \dots, \omega_{n-k}\} \subset E$ such that $\{v_1,\dots, v_k, \omega_1, \dots, \omega_{n-k}\}$ forms a basis of $E$. Indeed, let us fix another set of independent vectors $\{w_1, \dots, w_{n-k}\} \subset E$ such that $\{v_1,\dots, v_k, w_1, \dots, w_{n-k}\}$ forms a basis of $E$. Then, writing for each $i\in\{1,\dots, n-k\}$, the vector $w_i$ with respect to the basis $\{v_1,\dots, v_k, \omega_1, \dots, \omega_{n-k}\}$, namely, $$w_i = \sum_{l=1}^{k} \lambda_{il} v_l + \sum_{b=1}^{n-k} \nu_{ib} \omega_b,$$ (where $\lambda_{il}, \nu_{ib}\in \mathbb{K}$ are such that for each $i\in\{1,\dots, n-k\}$ fixed, the set $S_i:=\{\nu_{ib}: b\in\{1,\dots,n-k\}\}$ contains at least one non-zero element, and moreover for each $i,j\in\{1,\dots,n-k\}$, $i\neq j$, the vectors $\sum_{b=1}^{n-k} \nu_{ib} \omega_b$ and $\sum_{b=1}^{n-k} \nu_{jb} \omega_b$ are linearly independent, since $w$'s and $v$'s are linearly independent), and using the multilinear properties of the wedge product and the linearity of the Hodge star operator we get 
\begin{align*}
\operatorname{span}_{\mathbb{K}}\left\{\star\left( \bigwedge_{i=1, i\neq a}^{n-k} w_i \wedge \bigwedge_{l=1}^{k} v_l \right): a\in\{1,\dots, n-k\}\right\} \\
=\operatorname{span}_{\mathbb{K}}\left\{\star\left[ \bigwedge_{i=1, i\neq a}^{n-k} \left(\sum_{l=1}^{k} \lambda_{il} v_l + \sum_{b=1}^{n-k} \nu_{ib} \omega_b\right) \wedge \bigwedge_{l=1}^{k} v_l \right]: a\in\{1,\dots, n-k\}\right\} \\
=\operatorname{span}_{\mathbb{K}}\left\{\star\left( \bigwedge_{i=1, i\neq a}^{n-k} \omega_i \wedge \bigwedge_{l=1}^{k} v_l \right): a\in\{1,\dots, n-k\}\right\}.
\end{align*}

Let us now show that the vectors $$u_a :=\star\left( \bigwedge_{i=1, i\neq a}^{n-k} \omega_i \wedge \bigwedge_{l=1}^{k} v_l \right),$$ for $a\in \{1,\dots, n-k\}$, are linearly independent solutions of the system $$\langle u, v_1\rangle = \dots = \langle u, v_k\rangle = 0.$$ 

For proving the linear independence of the vectors $\{u_1,\dots,u_{n-k}\}$, since the Hodge star operator is an isomorphism, it is enough to show the linear independence of the $(n-1)$-vectors $\{\star u_1,\dots,\star u_{n-k}\}$. In order to do that, let us consider $a_1,\dots,a_{n-k}\in\mathbb{K}$ such that
\begin{equation*}
\sum_{s=1}^{n-k} a_s \left((-1)^{n-1}\bigwedge_{i=1, i\neq s}^{n-k} \omega_i \wedge \bigwedge_{l=1}^{k} v_l \right) =0.
\end{equation*}
Wedging both members of the above equality with $\omega_r$ for a fixed $r\in\{1,\dots, n-k\}$, one obtain $$a_r \bigwedge_{i=1}^{n-k}\omega_i \wedge\bigwedge_{l=1}^{k}v_l =0,$$ and consequently $a_r =0$, since $\bigwedge_{i=1}^{n-k}\omega_i \wedge\bigwedge_{l=1}^{k}v_l\neq 0$, because $\{v_1,\dots, v_k, \omega_1, \dots, \omega_{n-k}\}$ are linearly independent. Repeating the argument for each $r\in\{1,\dots, n-k\}$ one obtain $a_1 = \dots = a_{n-k}=0$, and hence the linear independence of the $(n-1)$-vectors $\{\star u_1,\dots,\star u_{n-k}\}$.

The last step of the proof is to verify that for each $a\in\{1,\dots,n-k\}$, $u_a$ is a solution of the system $\langle u, v_1\rangle = \dots = \langle u, v_k\rangle = 0$.
Note that for arbitrary $a\in\{1,\dots,n-k\}$ and $i\in\{1,\dots,k\}$, by using the formula \eqref{Hs} we obtain
\begin{align*}
\langle u_a,v_i\rangle \mu &=\langle \star\left( \bigwedge_{i=1, i\neq a}^{n-k} \omega_i \wedge \bigwedge_{l=1}^{k} v_l \right),v_i\rangle \mu \\
&=\left(\bigwedge_{i=1, i\neq a}^{n-k} \omega_i \wedge \bigwedge_{l=1}^{k} v_l \right)\wedge v_i\\
&=\bigwedge_{i=1, i\neq a}^{n-k} \omega_i \wedge\left( \bigwedge_{l=1}^{k} v_l \wedge v_i \right)\\
&=0.
\end{align*}
Since from the definition of the Hodge star operator, $\mu \neq 0$, we obtain $\langle u_a,v_i\rangle =0$.
\end{proof}

Let us point out that the Proposition \eqref{aux} remains valid also for $k\in\{n-1,n\}$, the only difference from the case $0<k<n-1$ being the inconsistency of the notations, which in these limit cases may lead to confusions. Hence, for these limit cases we prefer to state separately the conclusion of the Proposition \eqref{aux}.

\begin{remark}\label{re0}
\begin{itemize}
\item For $k=n$, the Proposition \eqref{aux} becomes trivial since the only solution of the system $\langle u, v_1\rangle = \dots = \langle u, v_n\rangle = 0$ is $u=0$. 

\item For $k=n-1$, the conclusion of Proposition \eqref{aux} becomes as follows:

The solutions $u\in E$ of the system
\begin{equation*}
\langle u, v_1\rangle = \dots = \langle u, v_{n-1}\rangle = 0,
\end{equation*}
are the elements of the $1$-dimensional vector subspace 
$$
E[v_1,\dots,v_{n-1}]:=\operatorname{span}_{\mathbb{K}}\left\{\star\left( \bigwedge_{l=1}^{n-1} v_l \right)\right\}.
$$
\end{itemize}
\end{remark}

Let us give now the main result of this section, which provides a coordinate free formulation of the linear variety described by the intersection of $k$ linear hyperplanes and respectively $p$ affine hyperplanes of an $n$ - dimensional inner product space $(E,\langle\cdot,\cdot\rangle)$.  

\begin{theorem}\label{MT}
Let $(E,\langle\cdot,\cdot\rangle)$ be an $n$-dimensional inner product space over a field $\mathbb{K}$ of characteristic zero. Let $k,p\in \mathbb{N}$, $k>0$, $p>1$, $k+p < n-1$, $\lambda_1, \dots, \lambda_p \in \mathbb{K}\setminus \{0\}$ be given, and let $\{v_1,\dots, v_k, w_1, \dots, w_p\}\subset E$ be a set of linearly independent vectors. 

Then the solutions $u\in E$ of the system
\begin{equation}\label{gsis}
\left\{\begin{array}{l}
\langle u, v_1\rangle = \dots = \langle u, v_k\rangle = 0,\\
\langle u, w_1\rangle = \lambda_1, \dots, \langle u, w_p\rangle = \lambda_p,\\
\end{array}\right.
\end{equation}
are given by $u = u_{0} + u_{\perp}$, where
\begin{equation*}
u_{0}=\left\| \bigwedge_{i=1}^{p} w_i\wedge\bigwedge_{j=1}^{k} v_j \right\|_{k+p}^{-2}\cdot\sum_{i=1}^{p}(-1)^{n-i}\lambda_i \Theta_i, 
\end{equation*}
with $$\Theta_i = \star\left[ \bigwedge_{j=1, j\neq i}^{p} w_j \wedge \bigwedge_{l=1}^{k} v_l  \wedge\star\left(\bigwedge_{j=1}^{p} w_j\wedge\bigwedge_{l=1}^{k} v_l\right)\right],$$
and $$u_{\perp}\in \operatorname{span}_{\mathbb{K}}\left\{\star\left( \bigwedge_{i=1, i\neq a}^{n-(k+p)} \omega_i \wedge \bigwedge_{j=1}^{p} w_j\wedge\bigwedge_{l=1}^{k} v_l \right): a\in\{1,\dots, n-(k+p)\}\right\},$$ where $$\{\omega_1, \dots, \omega_{n-(k+p)}\} \subset E $$ is a set of linearly independent vectors such that $\{v_1,\dots, v_k, w_1, \dots, w_p, \omega_1, \dots, \omega_{n-(k+p)}\}$ forms a basis of $E$.
\end{theorem}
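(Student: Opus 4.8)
The plan is to exploit the elementary fact that the solution set of a consistent inhomogeneous linear system is the sum of one particular solution and the solution space of the associated homogeneous system. Thus it suffices to (i) identify the homogeneous solution space, and (ii) verify that the explicit vector $u_0$ solves the full system \eqref{gsis}.

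For step (i), the homogeneous system obtained by setting each $\lambda_i=0$ is exactly $\langle u, v_1\rangle = \dots = \langle u, v_k\rangle = \langle u, w_1\rangle = \dots = \langle u, w_p\rangle = 0$, a system of $k+p$ equations whose normal directions $\{v_1,\dots,v_k,w_1,\dots,w_p\}$ are, by hypothesis, linearly independent. Applying Proposition~\ref{aux} to these $k+p$ vectors (with complementary family $\{\omega_1,\dots,\omega_{n-(k+p)}\}$) identifies this solution space with precisely the span prescribed for $u_\perp$; reordering the $v$'s and $w$'s inside the wedge alters each generator only by a global sign and so does not change the span.

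For step (ii), set $\Omega := \bigwedge_{j=1}^p w_j\wedge\bigwedge_{l=1}^k v_l$ and $\Omega_i := \bigwedge_{j=1,\,j\neq i}^p w_j\wedge\bigwedge_{l=1}^k v_l$, so that $\Theta_i=\star(\Omega_i\wedge\star\Omega)$ is a $1$-vector. The key computation is to evaluate $\langle\Theta_i,x\rangle$ for an arbitrary $x\in E$. Applying \eqref{Hs} to the $(n-1)$-vector $\Omega_i\wedge\star\Omega$ gives $\langle\Theta_i,x\rangle\mu=(\Omega_i\wedge\star\Omega)\wedge x$; commuting $x$ past the $(n-(k+p))$-vector $\star\Omega$ introduces the sign $(-1)^{n-(k+p)}$, after which a second application of \eqref{Hs} together with the isometry property $\langle\star\nu_1,\star\nu_2\rangle=\langle\nu_1,\nu_2\rangle$ collapses the remaining wedge to a Gram pairing, yielding
\begin{equation*}
\langle\Theta_i,x\rangle=(-1)^{n-(k+p)}\,\langle\Omega_i\wedge x,\Omega\rangle_{k+p}.
\end{equation*}
Evaluating at $x=v_l$ gives $\Omega_i\wedge v_l=0$ (since $v_l$ already occurs in $\Omega_i$), hence $\langle\Theta_i,v_l\rangle=0$ for every $l$; evaluating at $x=w_m$ gives $0$ whenever $m\neq i$ by the same repetition argument, while for $m=i$ the factor $\Omega_i\wedge w_i$ equals $\Omega$ up to the sign incurred in restoring $w_i$ to its original slot.

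The step I expect to be most delicate is the sign bookkeeping in the case $x=w_i$: the sign $(-1)^{n-(k+p)}$ coming from commuting $x$ past $\star\Omega$ must be combined with the sign $(-1)^{k+p-i}$ produced by transporting $w_i$ across the $k$ vectors $v_l$ and the $p-i$ vectors $w_j$ ($j>i$) to reconstitute $\Omega$. These multiply to $(-1)^{n-i}$, so that $\langle\Theta_i,w_i\rangle=(-1)^{n-i}\langle\Omega,\Omega\rangle_{k+p}$. Substituting into the definition of $u_0$ and using $\langle\Omega,\Omega\rangle_{k+p}=\|\Omega\|_{k+p}^2\neq 0$ (nonzero by the linear independence of $\{v_1,\dots,v_k,w_1,\dots,w_p\}$), the factor $(-1)^{n-i}$ and the normalization cancel, giving $\langle u_0,w_m\rangle=\lambda_m$ for each $m$ and $\langle u_0,v_l\rangle=0$ for each $l$. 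Hence $u_0$ solves \eqref{gsis}, and combining this with step (i) completes the proof.
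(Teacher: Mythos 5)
Your proposal is correct and follows essentially the same route as the paper: decompose the solution set as a particular solution plus the homogeneous solution space, identify the latter via Proposition~\ref{aux}, and verify $u_0$ by computing $\langle\Theta_i,v_l\rangle$ and $\langle\Theta_i,w_m\rangle$ through the defining property \eqref{Hs} of the Hodge star, with the same sign bookkeeping yielding $\langle\Theta_i,w_d\rangle=(-1)^{n-i}\|\Omega\|_{k+p}^2\,\delta_{id}$. Your intermediate identity $\langle\Theta_i,x\rangle=(-1)^{n-(k+p)}\langle\Omega_i\wedge x,\Omega\rangle_{k+p}$ is a slightly more compact packaging of the paper's wedge rearrangements, but the substance is identical.
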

\begin{proof}
Recall first that the intersection of $k+p$ (affine) hyperplanes with linearly independent normal directions, of an $n$ dimensional vector space is a linear variety of dimension $n-(k+p)$, whose direction is given by the intersection of the associated linear hyperplanes, namely the solutions set of the system
\begin{equation*}
\left\{\begin{array}{l}
\langle u, v_1\rangle = \dots = \langle u, v_k\rangle = 0,\\
\langle u, w_1\rangle = \dots, \langle u, w_p\rangle = 0,\\
\end{array}\right.
\end{equation*}
which by Proposition \eqref{aux}, is given by the vector subspace
\begin{align*}
E[w_1,\dots, & w_p,v_1,\dots,v_k] = \\
&=\operatorname{span}_{\mathbb{K}}\left\{\star\left( \bigwedge_{i=1, i\neq a}^{n-(k+p)} \omega_i \wedge \bigwedge_{j=1}^{p} w_j\wedge\bigwedge_{l=1}^{k} v_l \right) : a\in\{1,\dots, n-(k+p)\}\right\},
\end{align*}
where $$\{\omega_1, \dots, \omega_{n-(k+p)}\} \subset E $$ is a set of linearly independent vectors such that $\{v_1,\dots, v_k, w_1, \dots, w_p, \omega_1, \dots, \omega_{n-(k+p)}\}$ forms a basis of $E$.

Note that the solutions set of the system \eqref{gsis} is the linear variety 
\begin{equation*}
u_0 + E[w_1,\dots,  w_p,v_1,\dots , v_k] \subset {E},
\end{equation*}
where $u_0 \in E$ is a particular solution of the system \eqref{gsis}. 

Hence, in order to complete the proof of the theorem it is enough to check that 
\begin{equation*}
u_{0}=\left\| \bigwedge_{i=1}^{p} w_i\wedge\bigwedge_{j=1}^{k} v_j \right\|_{k+p}^{-2}\cdot\sum_{i=1}^{p}(-1)^{n-i}\lambda_i \Theta_i, 
\end{equation*}
with $$\Theta_i = \star\left[ \bigwedge_{j=1, j\neq i}^{p} w_j \wedge \bigwedge_{l=1}^{k} v_l  \wedge\star\left(\bigwedge_{j=1}^{p} w_j\wedge\bigwedge_{l=1}^{k} v_l\right)\right],$$
verifies the system \eqref{gsis}.

Let $v_b\in\{v_1,\dots,v_k\}$ be arbitrary fixed. Using the linearity of the inner product we obtain
\begin{equation*}
\langle u_0, v_b\rangle =\left\| \bigwedge_{i=1}^{p} w_i\wedge\bigwedge_{j=1}^{k} v_j \right\|_{k+p}^{-2}\cdot\sum_{i=1}^{p}(-1)^{n-i}\lambda_i \langle \Theta_i , v_b\rangle.
\end{equation*}
In order to show that $\langle u_0, v_b\rangle=0$, it is enough to prove that $\langle \Theta_i , v_b\rangle =0$, for every $i\in\{1,\dots,p\}$. Indeed, using the properties of the Hodge star operator we obtain:
\begin{align*}
\langle \Theta_i, v_b\rangle \mu &=\langle \star\left[ \bigwedge_{j=1, j\neq i}^{p} w_j \wedge \bigwedge_{l=1}^{k} v_l  \wedge\star\left(\bigwedge_{j=1}^{p} w_j\wedge\bigwedge_{l=1}^{k} v_l\right)\right], v_b\rangle \mu\\
&=\left[ \bigwedge_{j=1, j\neq i}^{p} w_j \wedge \bigwedge_{l=1}^{k} v_l  \wedge\star\left(\bigwedge_{j=1}^{p} w_j\wedge\bigwedge_{l=1}^{k} v_l\right)\right]\wedge v_b\\
&=(-1)^{k[n-(p+k)]}\bigwedge_{j=1, j\neq i}^{p} w_j \wedge \star\left(\bigwedge_{j=1}^{p} w_j\wedge\bigwedge_{l=1}^{k} v_l\right)\wedge \left(\bigwedge_{l=1}^{k} v_l  \wedge v_b \right)\\
&=0,
\end{align*}
and hence $\langle \Theta_i, v_b\rangle=0$.

Last step of the proof is to verify that $\langle u_0, w_d\rangle = \lambda_d$, for every $d\in\{1,\dots, p\}$. In order to do that, let $d\in\{1,\dots, p\}$ be fixed.
Using the linearity of the inner product we obtain
\begin{equation}\label{eer}
\langle u_0, w_d\rangle =\left\| \bigwedge_{i=1}^{p} w_i\wedge\bigwedge_{j=1}^{k} v_j \right\|_{k+p}^{-2}\cdot\sum_{i=1}^{p}(-1)^{n-i}\lambda_i \langle \Theta_i , w_d\rangle. 
\end{equation}
Let us first evaluate the terms $\langle \Theta_i , w_d\rangle$, for $i\in\{1,\dots,p\}$.
Using the properties of the Hodge star operator we obtain:
\begin{align*}
\langle \Theta_i, w_d\rangle \mu &=\langle \star\left[ \bigwedge_{j=1, j\neq i}^{p} w_j \wedge \bigwedge_{l=1}^{k} v_l  \wedge\star\left(\bigwedge_{j=1}^{p} w_j\wedge\bigwedge_{l=1}^{k} v_l\right)\right], w_d\rangle \mu\\
&=\left[ \bigwedge_{j=1, j\neq i}^{p} w_j \wedge \bigwedge_{l=1}^{k} v_l  \wedge\star\left(\bigwedge_{j=1}^{p} w_j\wedge\bigwedge_{l=1}^{k} v_l\right)\right]\wedge w_d\\
&=(-1)^{n-p}\left(\bigwedge_{j=1, j\neq i}^{p} w_j \wedge w_d \right)\wedge \bigwedge_{l=1}^{k} v_l \wedge\star\left(\bigwedge_{j=1}^{p} w_j\wedge\bigwedge_{l=1}^{k} v_l\right)\\
\end{align*}
\begin{align*}
&=\left\{\begin{array}{l}
0, \quad \text{if} \quad d\neq i\\
(-1)^{n-p}(-1)^{p-i}\left(\bigwedge_{j=1}^{p} w_j\wedge\bigwedge_{l=1}^{k} v_l\right)\wedge\star\left(\bigwedge_{j=1}^{p} w_j\wedge\bigwedge_{l=1}^{k} v_l\right),\quad  \text{if} \quad d= i\\
\end{array}\right.\\
&=\left\{\begin{array}{l}
0, \quad \text{if} \quad d\neq i\\
(-1)^{n-i}\left\| \star\left(\bigwedge_{j=1}^{p} w_j\wedge\bigwedge_{l=1}^{k} v_l \right)\right\|_{n-(k+p)}^{2}\mu, \quad \text{if} \quad d= i\\
\end{array}\right.\\
&=\left\{\begin{array}{l}
0, \quad \text{if} \quad d\neq i\\
(-1)^{n-i}\left\|\bigwedge_{j=1}^{p} w_j\wedge\bigwedge_{l=1}^{k} v_l \right\|_{k+p}^{2}\mu, \quad \text{if} \quad d= i,\\
\end{array}\right.\\
\end{align*}
and hence $\langle \Theta_i, w_d\rangle =(-1)^{n-i}\left\|\bigwedge_{j=1}^{p} w_j\wedge\bigwedge_{l=1}^{k} v_l \right\|_{k+p}^{2} \delta_{id}$, where $\delta$ stands for Kronecker's delta, namely $\delta_{id}=0$ for $i\neq d$, and respectively $\delta_{id}=1$ for $i=d$.

Replacing the terms $\langle \Theta_i, w_d\rangle=(-1)^{n-i}\left\|\bigwedge_{j=1}^{p} w_j\wedge\bigwedge_{l=1}^{k} v_l \right\|_{k+p}^{2} \delta_{id}$ in the equality \eqref{eer} we obtain 
\begin{align*}
\langle u_0, w_d\rangle &=\left\| \bigwedge_{i=1}^{p} w_i\wedge\bigwedge_{j=1}^{k} v_j \right\|_{k+p}^{-2}\cdot\sum_{i=1}^{p}(-1)^{n-i}\lambda_i \langle \Theta_i , w_d\rangle\\
&=\left\| \bigwedge_{i=1}^{p} w_i\wedge\bigwedge_{j=1}^{k} v_j \right\|_{k+p}^{-2}\cdot\sum_{i=1}^{p}(-1)^{n-i}\lambda_i (-1)^{n-i}\left\|\bigwedge_{j=1}^{p} w_j\wedge\bigwedge_{l=1}^{k} v_l \right\|_{k+p}^{2} \delta_{id}\\
&=\sum_{i=1}^{p}\lambda_i \delta_{id}\\
&=\lambda_d,
\end{align*}
and hence $\langle u_0, w_d\rangle=\lambda_d$.
\end{proof}

\begin{remark}
If one adopt the notation $\mathcal{H}_{(v_i;0)}:=\{u\in E \mid \langle u, v_i\rangle =0\}$, $i\in\{1,\dots, k\}$, and respectively $\mathcal{H}_{(w_j;\lambda_j)}:=\{u\in E \mid \langle u, w_j\rangle =\lambda_j\}$, $j\in\{1,\dots, p\}$, then the intersection of the above defined linear and respectively affine hyperplanes, is the linear variety
\begin{equation}\label{HypInt}
\bigcap_{i=1}^{k}\mathcal{H}_{(v_i;0)}\cap\bigcap_{j=1}^{p}\mathcal{H}_{(w_j;\lambda_j)}= u_0 + E[w_1,\dots, w_p,v_1,\dots,v_k],
\end{equation}
where 

\begin{equation*}
u_{0}=\left\| \bigwedge_{i=1}^{p} w_i\wedge\bigwedge_{j=1}^{k} v_j \right\|_{k+p}^{-2}\cdot\sum_{i=1}^{p}(-1)^{n-i}\lambda_i \Theta_i, 
\end{equation*}
with $$\Theta_i = \star\left[ \bigwedge_{j=1, j\neq i}^{p} w_j \wedge \bigwedge_{l=1}^{k} v_l  \wedge\star\left(\bigwedge_{j=1}^{p} w_j\wedge\bigwedge_{l=1}^{k} v_l\right)\right],$$
and 
\begin{align*}
E[w_1,\dots, & w_p,v_1,\dots,v_k] = \bigcap_{i=1}^{k}\mathcal{H}_{(v_i;0)}\cap\bigcap_{j=1}^{p}\mathcal{H}_{(w_j;0)}\\
&=\operatorname{span}_{\mathbb{K}}\left\{\star\left( \bigwedge_{i=1, i\neq a}^{n-(k+p)} \omega_i \wedge \bigwedge_{j=1}^{p} w_j\wedge\bigwedge_{l=1}^{k} v_l \right) : a\in\{1,\dots, n-(k+p)\}\right\},
\end{align*}
where $$\{\omega_1, \dots, \omega_{n-(k+p)}\} \subset E $$ is a set of linearly independent vectors such that $\{v_1,\dots, v_k, w_1, \dots, w_p, \omega_1, \dots, \omega_{n-(k+p)}\}$ forms a basis of $E$.
\end{remark}

Let us point out that the Theorem \eqref{MT} remains valid also for the limit cases $p\in\{0,1\}$, $k=0$, $k+p\in\{n-1,n\}$, the only difference from the general case being the inconsistency of the notations, which in these limit cases may lead to confusions. Hence, for these limit cases we prefer to state separately the conclusion of the Theorem \eqref{MT}.

\begin{remark}\label{re2}
\begin{itemize}
\item For $p=0$, the Theorem \eqref{MT} reduces to Proposition \eqref{aux}. 

\item For $p=1$, the conclusion of the Theorem \eqref{MT} becomes as follows:

The solutions $u\in E$ of the system
\begin{equation*}
\left\{\begin{array}{l}
\langle u, v_1\rangle = \dots = \langle u, v_k\rangle = 0,\\
\langle u, w_1\rangle = \lambda_1,\\
\end{array}\right.
\end{equation*}
are given by $u = u_{0} + u_{\perp}$, where
\begin{equation*}
u_{0}=\left\| w_1\wedge\bigwedge_{j=1}^{k} v_j \right\|_{k+1}^{-2}\cdot(-1)^{n-1}\lambda_1 \Theta_1, 
\end{equation*}
with $$\Theta_1 = \star\left[ \bigwedge_{l=1}^{k} v_l  \wedge\star\left(w_1\wedge\bigwedge_{l=1}^{k} v_l\right)\right],$$
and $$u_{\perp}\in \operatorname{span}_{\mathbb{K}}\left\{\star\left( \bigwedge_{i=1, i\neq a}^{n-(k+1)} \omega_i \wedge w_1 \wedge\bigwedge_{l=1}^{k} v_l \right): a\in\{1,\dots, n-(k+1)\}\right\},$$ where $$\{\omega_1, \dots, \omega_{n-(k+1)}\} \subset E $$ is a set of linearly independent vectors such that $\{v_1,\dots, v_k, w_1, \omega_1, \dots, \omega_{n-(k+1)}\}$ forms a basis of $E$.
\end{itemize}
\end{remark}

\begin{remark}\label{re3}
For $k=0$, the conclusion of the Theorem \eqref{MT} becomes as follows:

The solutions $u\in E$ of the system
\begin{equation*}
\langle u, w_1\rangle = \lambda_1, \dots, \langle u, w_p\rangle = \lambda_p,\\
\end{equation*}
are given by $u = u_{0} + u_{\perp}$, where
\begin{equation*}
u_{0}=\left\| \bigwedge_{i=1}^{p} w_i \right\|_{p}^{-2}\cdot\sum_{i=1}^{p}(-1)^{n-i}\lambda_i \Theta_i, 
\end{equation*}
with $$\Theta_i = \star\left[ \bigwedge_{j=1, j\neq i}^{p} w_j \wedge\star\left(\bigwedge_{j=1}^{p} w_j\right)\right],$$
and $$u_{\perp}\in \operatorname{span}_{\mathbb{K}}\left\{\star\left( \bigwedge_{i=1, i\neq a}^{n-p} \omega_i \wedge \bigwedge_{j=1}^{p} w_j \right): a\in\{1,\dots, n-p\}\right\},$$ where $$\{\omega_1, \dots, \omega_{n-p}\} \subset E $$ is a set of linearly independent vectors such that $\{w_1, \dots, w_p, \omega_1, \dots, \omega_{n-p}\}$ forms a basis of $E$.
\end{remark}

\begin{remark}\label{re1}
In the case when $k+p=n-1$, the above theorem conclusions still hold true, the only difference being the fact that the direction of the linear variety is one-dimensional and can be expressed only in terms of the vectors $v_1,\dots, v_k, w_1, \dots, w_p$:
$$
u_{\perp}\in \operatorname{span}_{\mathbb{K}}\left\{\star\left( \bigwedge_{j=1}^{p} w_j\wedge\bigwedge_{l=1}^{k} v_l \right)\right\}
$$ 
\end{remark}

\begin{remark}\label{re01}
For $k+p=n$, the conclusion of the Theorem \eqref{MT} becomes as follows:

The system
\begin{equation*}
\left\{\begin{array}{l}
\langle u, v_1\rangle = \dots = \langle u, v_k\rangle = 0,\\
\langle u, w_1\rangle = \lambda_1, \dots, \langle u, w_p\rangle = \lambda_p,\\
\end{array}\right.
\end{equation*}
has a unique solution which is given by $u = u_{0}$, where
\begin{equation*}
u_{0}=\left\| \bigwedge_{i=1}^{p} w_i\wedge\bigwedge_{j=1}^{k} v_j \right\|_{n}^{-2}\cdot\sum_{i=1}^{p}(-1)^{n-i}\lambda_i \Theta_i, 
\end{equation*}
with $$\Theta_i = \star\left[ \bigwedge_{j=1, j\neq i}^{p} w_j \wedge \bigwedge_{l=1}^{k} v_l  \wedge\star\left(\bigwedge_{j=1}^{p} w_j\wedge\bigwedge_{l=1}^{k} v_l\right)\right].$$
\end{remark}

\section{Local generators of affine distributions on Riemannian manifolds}

The purpose of this section is to translate on smooth Riemannian manifolds the results given in the previous section. This approach follows naturally, and has direct applications to dynamical systems. As we will see in the next section, the results presented here will provide an explicit characterization of conservative and also dissipative dynamical systems.

Let us start by recalling the Riemannian version of the main protagonists of previous section. In order to do that, let $(M,g)$ be a smooth $n$-dimensional Riemannian manifold. Recall that the Riemannian metric $g$ induces an inner product space structure on each tangent space $(T_x M, g(x) =:\langle\cdot,\cdot\rangle_x )$, the assignment $x\mapsto g(x)$ depending smoothly on the base point $x\in M$. 

Recall also that for any $p\in\{1,\dots, n\}$, $\Lambda^{p}TM$, the $p$-th order exterior power of the tangent bundle $TM$, is a vector bundle over $M$ whose fiber on each point $x\in M$ is the vector space $\Lambda^{p}T_x M$. The smooth sections of $\Lambda^{p}TM$, $\Gamma^{\infty}(\Lambda^{p}TM)=:\mathfrak{X}^{p}(M)$, are called smooth $p$-vector fields ($\mathfrak{X}^{1}(M)=\mathfrak{X}(M)$). Note that since $(M,g)$ is a Riemannian manifold, one have a natural assignment, $x\mapsto g_p (x)$, depending smoothly on $x\in M$, where $(\Lambda^{p}T_x M, g_p(x))$ is an inner product space together with the natural inner product $g_p (x)$ induced by the inner product $g(x)$ from $T_x M$, through the formula \eqref{ipmv}. 

More precisely, the map $g_p:\Gamma^{\infty}(\Lambda^{p}TM)\times \Gamma^{\infty}(\Lambda^{p}TM)\rightarrow \mathcal{C}^{\infty}(M,\mathbb{R})$ is given by $(g_p (X_1,X_2))(x)=g_p (x)(X_1 (x), X_2 (x))$, for any $X_1, X_2 \in \Gamma^{\infty}(\Lambda^{p}TM)=\mathfrak{X}^{p}(M)$, and any $x\in M$. Recall that $\|X\|_{p}=:\sqrt{g_p(X,X)}$, for any $X\in \mathfrak{X}^{p}(M)$.

Analogously, for every $p\in\{0,\dots, n\}$, the smooth family of Hodge star operators, $\star_x : \Lambda^{p}T_x M \rightarrow \Lambda^{n-p}T_x M$, induces a base point preserving operator $\star : \Gamma^{\infty}(\Lambda^{p}TM)\rightarrow \Gamma^{\infty}(\Lambda^{n-p}TM)$, namely the Hodge star operator on multivector fields, $\star : \mathfrak{X}^{p}(M)\rightarrow \mathfrak{X}^{n-p}(M)$. For more details regarding Riemannian manifolds see, e.g., \cite{jost}.

Before stating the main result of this section, let us recall that a moving frame on an open subset $\mathcal{O}$ of an $n$-dimensional smooth manifold $M$, consists of a set of $n$ locally defined smooth vector fields, $\{X_1,\dots, X_n\}\subset \mathfrak{X}(\mathcal{O})$, such that $\operatorname{span}_{\mathbb{R}}\{X_1(x),\dots, X_n(x)\}=T_x M$, for each $x\in \mathcal{O}$. Note that moving frames always exist locally, in some open neighborhood around any given point of the manifold, but in general they are not globally defined. The existence of globally defined moving frames is equivalent to the triviality of the tangent bundle, $TM$. The manifolds with globally defined moving frames are called parallelizable, e.g., Lie groups. In the case when $M$ is a Riemannian manifold $(M,g)$, to any moving frame one can associate an orthogonal (orthonormal) moving frame, that is, a frame consisting of orthogonal (unit) vectors at each point.

Let us state now the main result of this sections, which is the equivalent of the Theorem \eqref{MT}, in Riemannian setting.
\begin{theorem}\label{MTR}
Let $(M,g)$ be an $n$-dimensional smooth Riemannian manifold, and fix $k,p\in \mathbb{N}$ two natural numbers such that $k>0$, $p>1$, $k+p < n-1$. Let $h_1, \dots, h_p \in \mathcal{C}^{\infty}(U,\mathbb{R})$ be a given set of non-zero smooth functions defined on an open subset $U\subseteq M$, and respectively let $\{X_1,\dots, X_k, Y_1, \dots, Y_p\}\subset \mathfrak{X}(U)$ be a set of linearly independent vector fields on $U$.

Then the solutions $X\in \mathfrak{X}(U)$ of the system
\begin{equation}\label{gsi}
\left\{\begin{array}{l}
g( X, X_1)= \dots = g(X, X_k) = 0,\\
g( X, Y_1) = h_1, \dots, g(X, Y_p) = h_p,\\
\end{array}\right.
\end{equation}
are given by $X = X_{0} + X_{\perp}$, where
\begin{equation*}
X_{0}=\left\| \bigwedge_{i=1}^{p} Y_i\wedge\bigwedge_{j=1}^{k} X_j \right\|_{k+p}^{-2}\cdot\sum_{i=1}^{p}(-1)^{n-i}h_i \Theta_i, 
\end{equation*}
with $$\Theta_i = \star\left[ \bigwedge_{j=1, j\neq i}^{p} Y_j \wedge \bigwedge_{l=1}^{k} X_l  \wedge\star\left(\bigwedge_{j=1}^{p} Y_j\wedge\bigwedge_{l=1}^{k} X_l\right)\right],$$
and $X_{\perp}(x)\in \operatorname{span}_{\mathbb{R}}^{\perp_{g(x)}}\left\{ X_1 (x),\dots, X_k (x), \dots, Y_1 (x), \dots, Y_p (x)\right\}$, for any $x\in U$.
\medskip

Moreover, for each $x\in U$, there exists an open neighborhood $U_x \subseteq U$, such that for any $x^{\prime}\in U_x$
$$X_{\perp}(x^{\prime})\in \operatorname{span}_{\mathbb{R}}\left\{\star_{x^{\prime}}\left( \bigwedge_{i=1, i\neq a}^{n-(k+p)} Z_i (x^{\prime}) \wedge \bigwedge_{j=1}^{p} Y_j (x^{\prime})\wedge\bigwedge_{l=1}^{k} X_l (x^{\prime})\right): a\in\{1,\dots, n-(k+p)\}\right\},$$ where $$\{Z_1, \dots, Z_{n-(k+p)}\} \subset \mathfrak{X}(U_x) $$ is an \textbf{arbitrary} set of linearly independent vector fields on $U_x$, such that the vector fields $$\{X_1,\dots, X_k, Y_1, \dots, Y_p, Z_1, \dots, Z_{n-(k+p)}\}$$ are linearly independent on the open subset $U_x\subseteq  U$, i.e., they form a moving frame on $U_x$.
\end{theorem}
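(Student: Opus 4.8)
The plan is to apply Theorem \eqref{MT} fiberwise, on each inner product space $(T_x M, g(x))$, $x\in U$, and then to upgrade the resulting pointwise linear-algebraic description to a statement about smooth sections over $U$. Concretely, I would fix an arbitrary $x\in U$ and invoke Theorem \eqref{MT} on $(T_x M, g(x))$, taking $v_j := X_j(x)$, $w_i := Y_i(x)$ and $\lambda_i := h_i(x)$. Since $\{X_1,\dots,X_k,Y_1,\dots,Y_p\}$ are linearly independent as vector fields on $U$, their values at $x$ are linearly independent in $T_x M$, so the hypotheses of Theorem \eqref{MT} are met (and the particular-solution formula derived in its proof is in fact valid for arbitrary scalars $\lambda_i$, so no difficulty arises at points where some $h_i$ might vanish). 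It follows that the solution set of system \eqref{gsi} evaluated at $x$ is the affine subspace $X_0(x)+E_x\subseteq T_x M$, where $X_0(x)$ is precisely the value at $x$ of the vector field $X_0$ in the statement and the direction $E_x$ is the solution space of the associated homogeneous system $g(X,X_i)(x)=g(X,Y_j)(x)=0$, namely $\operatorname{span}_{\mathbb{R}}^{\perp_{g(x)}}\{X_1(x),\dots,X_k(x),Y_1(x),\dots,Y_p(x)\}$. This already yields the asserted pointwise description $X=X_0+X_\perp$ with $X_\perp(x)\in E_x$.

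Next I would check that $X_0$ is a genuine smooth vector field on $U$, so that $X_\perp=X-X_0$ is smooth precisely when $X$ is. Every ingredient of $X_0$ is smooth: wedge products of the smooth multivector fields built from $X_l$ and $Y_j$, the base-point preserving Hodge star operators $\star:\mathfrak{X}^q(U)\to\mathfrak{X}^{n-q}(U)$, and the functions $h_i$. The normalizing factor $\left\|\bigwedge_{i=1}^{p}Y_i\wedge\bigwedge_{j=1}^{k}X_j\right\|_{k+p}^{-2}$ is smooth and well defined, because the Gram determinant of a pointwise linearly independent family is strictly positive, hence nowhere vanishing on $U$. Therefore $X_0\in\mathfrak{X}(U)$, and the solutions of \eqref{gsi} are exactly $X_0+X_\perp$ with $X_\perp$ any smooth vector field satisfying $X_\perp(x)\in E_x$ for all $x\in U$.

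For the ``moreover'' part I would construct the local moving frame. Fixing $x\in U$, the linear independence of $X_1(x),\dots,Y_p(x)$ lets me choose, inside a coordinate chart about $x$, smooth vector fields $Z_1,\dots,Z_{n-(k+p)}$ (for instance suitable coordinate vector fields) whose values at $x$ complete $\{X_1(x),\dots,Y_p(x)\}$ to a basis of $T_x M$. Linear independence of a family of $n$ vectors is the nonvanishing of a determinant, an open condition, so it persists on some open neighborhood $U_x\subseteq U$, on which $\{X_1,\dots,Y_p,Z_1,\dots,Z_{n-(k+p)}\}$ is a moving frame. Applying Proposition \eqref{aux} fiberwise on $U_x$ then identifies $E_{x'}$, for each $x'\in U_x$, with the span of the displayed Hodge-star expressions; that this span is independent of the particular completing frame is exactly the first part of the proof of Proposition \eqref{aux}, which is what licenses the use of an \textbf{arbitrary} such $\{Z_i\}$.

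The routine verification that $X_0$ solves \eqref{gsi} need not be repeated, being the pointwise instance of the computation already carried out in the proof of Theorem \eqref{MT}. I expect the main obstacle to be not the algebra but the passage from fibers to sections: establishing the smoothness and nonvanishing of the normalizing Gram factor together with the smoothness of the fiberwise Hodge star, and, for the final assertion, guaranteeing via the openness of the linear-independence condition that a completing basis chosen at a single point extends to a genuine moving frame on an entire neighborhood $U_x$.
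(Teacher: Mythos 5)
Your proposal is correct and follows essentially the same route as the paper: the paper's own (very terse) proof likewise completes $\{X_1,\dots,X_k,Y_1,\dots,Y_p\}$ locally to a moving frame and then states that the rest "follows mimetically" the proof of Theorem \ref{MT}, i.e.\ the fiberwise application of the linear-algebra result that you carry out. Your additional attention to the smoothness and nonvanishing of the Gram normalizing factor, and to the case where some $h_i$ vanishes at a point, only makes explicit details the paper leaves implicit.
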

\begin{proof}
Note that for each $x\in U$, there exists an open neighborhood $U_x \subseteq U$, and a set of linearly independent vector fields $\{Z_1, \dots, Z_{n-(k+p)}\} \subset \mathfrak{X}(U_x) $, such that for any $x^{\prime}\in U_x$ 
$$
\operatorname{span}_{\mathbb{R}}\{X_1 (x^{\prime}),\dots, X_k (x^{\prime}), Y_1 (x^{\prime}), \dots, Y_p (x^{\prime}), Z_1 (x^{\prime}), \dots, Z_{n-(k+p)} (x^{\prime})\}=T_{x^{\prime}} M.
$$

Recall that the set of linearly independent vector fields $Z_1, \dots, Z_{n-(k+p)}$ with the above property, is not unique. The rest of the proof follows mimetically the proof of the Theorem \eqref{MT}.
\end{proof}

\bigskip
An immediate consequence of Proposition \eqref{aux} is the following Remark.
\begin{remark}\label{remu}
The set of vector fields $$\mathfrak{X}[X_1,\dots, X_k, Y_1, \dots, Y_p]=\{X\in\mathfrak{X}(U)\mid g(X,X_i)=g(X,Y_j)=0; 1\leq i\leq k; 1\leq j\leq p\}$$ forms an $[n-(k+p)]$-dimensional smooth distribution, locally generated around each point $x\in U$, in some open neighborhood $U_x \subseteq U$, by the set of vector fields 
$$\left\{\star\left( \bigwedge_{i=1, i\neq a}^{n-(k+p)} Z_i \wedge \bigwedge_{j=1}^{p} Y_j\wedge\bigwedge_{l=1}^{k} X_l \right): a\in\{1,\dots, n-(k+p)\right\}\subset \mathfrak{X}(U_x).$$ 

Recall that in contrast with the vector fields $X_1,\dots, X_k, Y_1, \dots, Y_p$, which are globally defined on $U$, the vector fields $Z_1 , \dots, Z_{n-(k+p)}$ are only locally defined since their existence depend on $x$, and is guaranteed in general only in some open neighborhood $U_x$ around $x$. Moreover, they are arbitrary chosen in order to be linearly independent and to complete locally the set of vector fields $\{X_1,\dots, X_k, Y_1, \dots, Y_p\}$ up to a moving frame in $U_x$.
\end{remark}
By a similar argument as in the proof of Proposition \eqref{aux}, the above defined set of local generators does not depend on the set of locally defined linearly 
independent vector fields $Z_1, \dots, Z_{n-(k+p)},$ as long as $$\{X_1,\dots, X_k, Y_1, \dots, Y_p, Z_1, \dots, Z_{n-(k+p)}\}$$ forms a moving frame.

\bigskip

Let us fix some general notations to be used for the rest of the paper. Let $\mathcal{A}\subset\mathfrak{X}(U)$ be a smooth $r$-dimensional affine distribution on the open subset $U$ of a smooth $n-$dimensional manifold $M$. This means that for each $x\in U$, there exists an open neighborhood $U_x \subseteq U$, a smooth vector field $X_0 \in \mathfrak{X}(U_x)$, and $r$ linearly independent smooth vector fields $\{X_1 ,\dots , X_r \}\subset  \mathfrak{X}(U_x)$ such that $$\mathcal{A}_{x^{\prime}} = X_0 (x^{\prime})+ \operatorname{span}_{\mathbb{R}}\{X_1 (x^{\prime}),\dots, X_{r}(x^{\prime})\},$$ for each $x^{\prime} \in U_x$. 

A set of locally defined vector fields $$\{X_0\}\biguplus \{ X_1,\dots, X_r\},$$ fulfilling the above requirements, is called \textit{a set of local generators} of the smooth affine distribution $\mathcal{A}$.

Recall that the $r-$dimensional smooth distribution that assigns to each $x\in U$ the direction of the affine space $\mathcal{A}_x$, is denoted by $L(\mathcal{A})$, and is called \textit{the linear part} of the affine distribution $\mathcal{A}$. Consequently, $L(\mathcal{A})$ can be generated locally around $x$, as
$$L(\mathcal{A})_{x^{\prime}}=\operatorname{span}_{\mathbb{R}}\{X_1 (x^{\prime}),\dots, X_{r}(x^{\prime})\},$$ for every $x^{\prime} \in U_x$. 

Note that for any fixed vector field $X\in\mathcal{A}|_{U_x}$, $$\mathcal{A}_{x^{\prime}}=X (x^{\prime})+ L(\mathcal{A})_{x^{\prime}},$$ for every $x^{\prime}\in U_x$.
\bigskip

Using the above notation for a set of local generators of a smooth affine distribution, the conclusion of the Theorem \eqref{MTR} can be reformulated as follows. 
\begin{theorem}\label{MTRD}
In the hypothesis of Theorem \eqref{MTR}, the solutions $X\in \mathfrak{X}(U)$ of the system \eqref{gsi} form the $[n-(k+p)]$-dimensional smooth affine distribution
$$
\mathfrak{A}[X_0; X_1,\dots, X_k, Y_1, \dots, Y_p]:= X_0 + \mathfrak{X}[X_1,\dots, X_k, Y_1, \dots, Y_p],
$$
locally generated by the following set of $[n-(k+p)]+1$ vector fields
$$\left\{X_0\right\}\biguplus \left\{\star\left( \bigwedge_{i=1, i\neq a}^{n-(k+p)} Z_i \wedge \bigwedge_{j=1}^{p} Y_j\wedge\bigwedge_{l=1}^{k} X_l \right): a\in\{1,\dots, n-(k+p)\right\}.$$
\end{theorem}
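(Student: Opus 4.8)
The plan is to read the statement off directly from Theorem \eqref{MTR} and Remark \eqref{remu}, since \eqref{gsi} is the very system solved there and the notion of affine distribution has just been introduced to package precisely this type of solution set. First I would verify that the particular solution $X_0$ supplied by Theorem \eqref{MTR} is a bona fide element of $\mathfrak{X}(U)$. Every ingredient entering its formula --- the vector fields $Y_i$ and $X_l$, the functions $h_i$, and the Hodge star operator on multivector fields --- is globally defined and smooth on $U$; moreover the scalar prefactor $\| \bigwedge_{i=1}^{p} Y_i\wedge\bigwedge_{j=1}^{k} X_j \|_{k+p}^{-2}$ is a smooth, nowhere-vanishing function on $U$, because the pointwise linear independence of $\{X_1,\dots, X_k, Y_1, \dots, Y_p\}$ makes the corresponding Gram determinant strictly positive at each point of $U$. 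Hence $X_0\in\mathfrak{X}(U)$ is a genuine globally defined particular solution of \eqref{gsi}.

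Next I would identify the linear part. Setting every $h_i$ equal to zero in \eqref{gsi} yields the homogeneous system whose solution set is, by definition, $\mathfrak{X}[X_1,\dots, X_k, Y_1, \dots, Y_p]$. By Remark \eqref{remu} this is an $[n-(k+p)]$-dimensional smooth distribution, generated around each $x\in U$ on some neighborhood $U_x$ by the vector fields $\star\bigl( \bigwedge_{i=1, i\neq a}^{n-(k+p)} Z_i \wedge \bigwedge_{j=1}^{p} Y_j\wedge\bigwedge_{l=1}^{k} X_l \bigr)$, $a\in\{1,\dots, n-(k+p)\}$, for any locally defined $Z_1,\dots, Z_{n-(k+p)}$ completing $\{X_1,\dots, X_k, Y_1, \dots, Y_p\}$ to a moving frame on $U_x$. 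Since \eqref{gsi} is affine-linear in $X$ with this homogeneous part, its solution set is the coset $X_0 + \mathfrak{X}[X_1,\dots, X_k, Y_1, \dots, Y_p]$, which is exactly the decomposition $X = X_0 + X_{\perp}$ of Theorem \eqref{MTR} read fiberwise.

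Finally I would check that this coset satisfies the definition of a smooth affine distribution together with the claimed generating set. For each $x^{\prime}\in U_x$ one has $\mathfrak{A}[\dots]_{x^{\prime}} = X_0(x^{\prime}) + L_{x^{\prime}}$, where $L_{x^{\prime}}$ is spanned by the Hodge-star generators from Remark \eqref{remu}; as $X_0$ is smooth on $U_x$ and these $n-(k+p)$ generators are smooth and linearly independent on $U_x$, the family $\{X_0\}\biguplus\{\star(\dots): a\}$ of $[n-(k+p)]+1$ vector fields is a set of local generators in the sense fixed above. I do not anticipate a genuine obstacle: the entire mathematical content resides in Theorem \eqref{MTR} and Remark \eqref{remu}, and the one point that repays a moment's care is the smoothness and global-on-$U$ definition of $X_0$, which rests on the non-vanishing of the Gram determinant secured by the linear independence hypothesis.
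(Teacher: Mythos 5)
Your proposal is correct and follows essentially the same route as the paper, which likewise obtains the result directly from Theorem \eqref{MTR} together with the pointwise linear-variety description coming from Proposition \eqref{aux} (as packaged in Remark \eqref{remu}). The only material you add is the explicit check that $X_0$ is smooth and globally defined on $U$ via the non-vanishing of the Gram determinant, a point the paper leaves implicit.
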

\begin{proof}
The proof follows by Theorem \eqref{MTR} and the fact that, in the notations of Proposition \eqref{aux}, for any $x\in U$, the vector $X(x)$ is an element of the linear variety of $T_x M$ passing through $X_0 (x)$ and having the direction $(T_{x} M)[X_1(x),\dots, X_k(x), Y_1(x), \dots, Y_p(x)]$. 
\end{proof}

\bigskip
As in the case of Theorem \eqref{MT}, let us now discuss some special cases of Theorems \eqref{MTR}, \eqref{MTRD}, namely the Riemannian analogous of Remarks \eqref{re2}, \eqref{re3}, \eqref{re1}, \eqref{re01}.

\begin{remark}\label{Re2}
\begin{itemize}
\item For $p=0$, the conclusion of Theorem \eqref{MTR} becomes as follows:

The distribution 
\begin{align*}
\mathfrak{X}[X_1,\dots, X_k]&=\{X\in\mathfrak{X}(U)\mid g(X,X_1)=\dots=g(X,X_k)=0\},
\end{align*}
is locally generated by the set of vector fields
$$\left\{\star\left( \bigwedge_{i=1, i\neq a}^{n-k} Z_i \wedge\bigwedge_{l=1}^{k} X_l \right): a\in\{1,\dots, n-k\}\right\},$$
where the set of locally defined vector fields $$\{Z_1,\dots,Z_{n-k}, X_1,\dots,X_k\}$$ forms a moving frame.

\item For $p=1$, the conclusion of Theorem \eqref{MTR} becomes as follows:

The affine distribution 
\begin{align*}
\mathfrak{A}[X_{0};X_1,\dots, X_k, Y_1]&=\{X\in\mathfrak{X}(U)\mid g(X,X_1)=\dots=g(X,X_k)=0, g(X,Y_1)=h_1\}\\
&=X_0 + \mathfrak{X}[X_1,\dots, X_k, Y_1],
\end{align*}
is locally generated by the set of vector fields
$$\left\{X_0\right\}\biguplus \left\{\star\left( \bigwedge_{i=1, i\neq a}^{n-(k+1)} Z_i \wedge Y_1 \wedge \bigwedge_{l=1}^{k} X_l \right): a\in\{1,\dots, n-(k+1)\}\right\},$$
where
$$
X_{0}=\left\| Y_1\wedge\bigwedge_{j=1}^{k} X_j \right\|_{k+1}^{-2}\cdot(-1)^{n-1}h_1 \cdot \left(\star\left[\bigwedge_{l=1}^{k} X_l \wedge\star\left(Y_1\wedge\bigwedge_{l=1}^{k} X_l\right)\right]\right), 
$$
and respectively the set of locally defined vector fields $$\{Z_1,\dots,Z_{n-(k+1)}, X_1,\dots,X_k, Y_1\}$$ forms a moving frame.
\end{itemize}
\end{remark}

\begin{remark}\label{Re3}
For $k=0$, the conclusion of Theorem \eqref{MTR} becomes as follows:

The affine distribution 
\begin{align*}
\mathfrak{A}[X_{0};Y_1,\dots, Y_p]&=\{X\in\mathfrak{X}(U)\mid g(X,Y_j)=h_j, \quad 1\leq j\leq p\}\\
&=X_0 + \mathfrak{X}[Y_1,\dots, Y_p],
\end{align*}
is locally generated by the set of vector fields
$$\left\{X_0\right\}\biguplus \left\{\star\left( \bigwedge_{i=1, i\neq a}^{n-p} Z_i \wedge \bigwedge_{j=1}^{p} Y_j \right): a\in\{1,\dots, n-p\}\right\},$$
where
$$
X_{0}=\left\|\bigwedge_{i=1}^{p} Y_i \right\|_{p}^{-2}\cdot \sum_{i=1}^{p}(-1)^{n-i}h_i \cdot \left(\star\left[\bigwedge_{j=1, j\neq i}^{p} Y_j \wedge\star\left(\bigwedge_{j=1}^{p} Y_j\right)\right]\right), 
$$
and respectively the set of locally defined vector fields $$\{Z_1,\dots,Z_{n-p}, Y_1,\dots,Y_p\}$$ forms a moving frame.
\end{remark}

\begin{remark}\label{Re1}
For $k+p=n-1$, the conclusion of Theorem \eqref{MTR} becomes as follows:

The affine distribution $\mathfrak{A}[X_0; X_1,\dots, X_k, Y_1, \dots, Y_p]$ is locally generated by the set vector fields
$$\left\{X_0\right\}\biguplus \left\{\star\left(\bigwedge_{j=1}^{p} Y_j\wedge\bigwedge_{l=1}^{k} X_l \right)\right\}.$$
\end{remark}

\begin{remark}\label{Re01}
For $k+p=n$, the conclusion of Theorem \eqref{MTR} reduces to:
\begin{align*}
\mathfrak{A}[X_0; X_1,\dots, X_k, &Y_1, \dots, Y_p]\\
&=\{X\in\mathfrak{X}(U)\mid g(X,X_i)=0, g(X,Y_j)=h_j, 1\leq i\leq k, 1\leq j\leq p\} \\
&=\{X_0\},
\end{align*}
where $$X_0 =\left\|\bigwedge_{i=1}^{p} Y_i \wedge \bigwedge_{l=1}^{k}X_l\right\|_{n}^{-2}\cdot \sum_{i=1}^{p}(-1)^{n-i}h_i \cdot \left(\star\left[\bigwedge_{j=1, j\neq i}^{p} Y_j \wedge\bigwedge_{l=1}^{k}X_l \wedge\star\left(\bigwedge_{j=1}^{p} Y_j\wedge \bigwedge_{l=1}^{k}X_l\right)\right]\right).$$

\end{remark}

\section{Applications to dynamical systems}

The aim of this section is to apply the main results from the previous section in the case of linear/affine distributions associated to conservative/dissipative dynamical systems defined eventually on an open subset $U$ of a Riemannian manifold $(M,g)$.

Before stating the main results, let us recall that a smooth function $F\in\mathcal{C}^{\infty}(U,\mathbb{R})$ is said to be a first integral (or conservation law) of the vector field $X\in\mathfrak{X}(U)$ if $\mathcal{L}_{X}F=0$, where $\mathcal{L}_{X}$ stands for the Lie derivative along the vector field $X$, or equivalently one say that $X$ conserves $F$. Similarly, a vector field $X\in\mathfrak{X}(U)$ is said to dissipate the smooth function $H\in\mathcal{C}^{\infty}(U,\mathbb{R})$ with dissipation rate $h\in\mathcal{C}^{\infty}(U,\mathbb{R})$, if $\mathcal{L}_{X}H=h$.

In the Riemannian setting, these conditions are obviously equivalent to $g(X,\nabla_{g} F)=0$, and respectively $g(X,\nabla_{g} H)=h$, where $\nabla_{g}$ stands for the gradient operator with respect to the Riemannian metric $g$.

In what follows, a vector field $X\in\mathfrak{X}(U)$ will be called \textbf{dissipative} if there exist $k,p\in\mathbb{N}$ with $k+p>0$, and a set of smooth functions $\{I_1,\dots, I_k,D_1,\dots, D_p, h_1,\dots, h_p\}\subset \mathcal{C}^{\infty}(U,\mathbb{R})$ such that the vector field $X$ conserves $I_1,\dots, I_k$ and dissipates 
$D_1,\dots, D_p$ with (corresponding) dissipation rates $h_1,\dots, h_p$. If $p=0$, the vector field $X$ will be called \textbf{conservative}.

Hence, one can apply the Theorem \eqref{MTR} in the case of linear/affine distributions associated to conservative/dissipative vector fields defined eventually on an open subset $U$ of a Riemannian manifold $(M,g)$.

\begin{theorem}\label{MTD}
Let $(M,g)$ be an $n$-dimensional smooth Riemannian manifold, and fix $k,p\in \mathbb{N}$ two natural numbers such that $k>0$, $p>1$, $k+p < n-1$. Let $h_1, \dots, h_p \in \mathcal{C}^{\infty}(U,\mathbb{R})$ be a given set of non-zero smooth functions defined on an open subset $U\subseteq M$, and respectively let $I_1,\dots, I_k, D_1,\dots, D_p\in \mathcal{C}^{\infty}(U,\mathbb{R})$ be given, such that $$\{\nabla_{g} I_1 ,\dots,\nabla_{g} I_k , \nabla_{g} D_1 , \dots,\nabla_{g} D_p\}\subset \mathfrak{X}(U)$$ form a set of linearly independent vector fields on $U$.

Then the solutions $X\in \mathfrak{X}(U)$ of the system
\begin{equation*}
\left\{\begin{array}{l}
\mathcal{L}_{X}I_1= \dots = \mathcal{L}_{X}I_k = 0,\\
\mathcal{L}_{X}D_1 = h_1, \dots, \mathcal{L}_{X}D_p = h_p,\\
\end{array}\right.
\end{equation*}
form the affine distribution (consisting of dissipative vector fields)
$$
\mathfrak{A}[X_0;\nabla_{g} I_1 ,\dots,\nabla_{g} I_k , \nabla_{g} D_1 , \dots, \nabla_{g} D_p ]=X_0 + \mathfrak{X}[\nabla_{g} I_1,\dots, \nabla_{g} I_k, \nabla_{g} D_1, \dots, \nabla_{g} D_p],
$$
locally generated by the set of vector fields
$$
\left\{X_0\right\}\biguplus \left\{\star\left( \bigwedge_{i=1, i\neq a}^{n-(k+p)} Z_i \wedge \bigwedge_{j=1}^{p} \nabla_{g} D_j \wedge\bigwedge_{l=1}^{k} \nabla_{g} I_l \right): a\in\{1,\dots, n-(k+p)\right\}
$$
where
$$
X_{0}=\left\| \bigwedge_{i=1}^{p} \nabla_{g} D_i\wedge\bigwedge_{j=1}^{k} \nabla_{g} I_j \right\|_{k+p}^{-2}\cdot\sum_{i=1}^{p}(-1)^{n-i}h_i \Theta_i, 
$$
$$
\Theta_i = \star\left[ \bigwedge_{j=1, j\neq i}^{p} \nabla_{g} D_j \wedge \bigwedge_{l=1}^{k} \nabla_{g} I_l  \wedge\star\left(\bigwedge_{j=1}^{p} \nabla_{g} D_j\wedge\bigwedge_{l=1}^{k} \nabla_{g} I_l \right)\right],
$$
and respectively the set of locally defined vector fields $$\{\nabla_{g} I_1,\dots, \nabla_{g} I_k, \nabla_{g} D_1, \dots, \nabla_{g} D_p, Z_1, \dots, Z_{n-(k+p)}\}$$ forms a moving frame.
\end{theorem}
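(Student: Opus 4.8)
The plan is to recognize that this theorem is a direct specialization of Theorem \eqref{MTRD}, obtained through the Riemannian dictionary that converts Lie-derivative conditions into inner-product conditions via the gradient operator. First I would recall the defining property of the gradient: for any smooth function $F\in\mathcal{C}^{\infty}(U,\mathbb{R})$ and any vector field $X\in\mathfrak{X}(U)$ one has
\begin{equation*}
\mathcal{L}_{X}F = X(F) = dF(X) = g(X,\nabla_{g}F),
\end{equation*}
which is simply the statement that $\nabla_{g}F$ is the metric dual of the differential $dF$. This identity was already recorded in the discussion preceding the theorem.

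Using this identity termwise, I would rewrite the given system of conservation and dissipation conditions. The equations $\mathcal{L}_{X}I_l = 0$ become $g(X,\nabla_{g}I_l)=0$ for $l\in\{1,\dots,k\}$, and the equations $\mathcal{L}_{X}D_j = h_j$ become $g(X,\nabla_{g}D_j)=h_j$ for $j\in\{1,\dots,p\}$. This is precisely the system \eqref{gsi} of Theorem \eqref{MTR} under the substitution $X_l := \nabla_{g}I_l$ for the homogeneous normal directions and $Y_j := \nabla_{g}D_j$ for the nonhomogeneous ones.

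Next I would check that the hypotheses transfer verbatim: the numerical conditions $k>0$, $p>1$, $k+p<n-1$ are identical; the functions $h_1,\dots,h_p$ are nonzero by assumption; and the linear independence of $\{\nabla_{g}I_1,\dots,\nabla_{g}I_k,\nabla_{g}D_1,\dots,\nabla_{g}D_p\}$, assumed in the hypothesis, is exactly the linear independence of $\{X_1,\dots,X_k,Y_1,\dots,Y_p\}$ required by Theorem \eqref{MTR}. With these hypotheses met, Theorem \eqref{MTRD} applies directly, and substituting $X_l=\nabla_{g}I_l$ and $Y_j=\nabla_{g}D_j$ into its conclusion yields at once the claimed affine distribution, the stated local generators, and the explicit formulas for $X_0$ and $\Theta_i$.

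The main point is that there is no genuine obstacle here: the entire content of the theorem is the gradient identity above, which converts the dynamically meaningful Lie-derivative formulation into the algebraic inner-product formulation already solved in the previous section. The only matter requiring attention is to confirm that Theorem \eqref{MTRD} treats the vectors $X_l$ and $Y_j$ as an \emph{arbitrary} linearly independent family, so that replacing them by gradients is legitimate without any further restriction; this is clear from its statement, and the proof therefore reduces to invoking Theorem \eqref{MTRD} after the substitution.
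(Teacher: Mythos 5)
Your proposal is correct and matches the paper's own (implicit) argument exactly: the paper likewise records the identity $\mathcal{L}_{X}F=g(X,\nabla_{g}F)$ just before the theorem and then states that one simply applies Theorem \eqref{MTR} (equivalently Theorem \eqref{MTRD}) with $X_l=\nabla_{g}I_l$ and $Y_j=\nabla_{g}D_j$. No further justification is given or needed, so your reduction is the intended proof.
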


A dynamical version of Theorem \eqref{MTD} can be formulated as follows.
\begin{theorem}
Let $\dot x =X(x)$ be the dynamical system generated by a vector field $X\in\mathfrak{X}(U)$ which conserves the smooth (functionally independent) functions $$I_1,\dots, I_k, D_1,\dots, D_p\in \mathcal{C}^{\infty}(U,\mathbb{R}).$$

Then the perturbed dynamical system 
$$
\dot x= X(x)+ X_{0}(x),  
$$
with $X_0$ given in Theorem \eqref{MTD}, is a dissipative dynamical system, generated by the dissipative vector field $X+X_0$ which conserves $I_1,\dots, I_k$, and dissipates $D_1,\dots, D_p$ with (corresponding) dissipation rates $h_1,\dots, h_p$.
\end{theorem}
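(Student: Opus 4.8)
The plan is to reduce everything to the additivity of the Lie derivative acting on functions with respect to its vector-field argument. For any $F\in\mathcal{C}^{\infty}(U,\mathbb{R})$ and any $Y,Z\in\mathfrak{X}(U)$ one has $\mathcal{L}_{Y+Z}F=\mathcal{L}_{Y}F+\mathcal{L}_{Z}F$, simply because $\mathcal{L}_{Y}F=Y(F)=dF(Y)$ is additive in $Y$. Together with the Riemannian identities recalled at the start of this section, namely $\mathcal{L}_{W}F=g(W,\nabla_{g}F)$, this turns the dynamical statement into a one-line verification once the defining scalar-product values of $X_0$ are known.

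First I would record the properties of $X_0$. By Theorem \eqref{MTD} (equivalently, Theorem \eqref{MTR} applied to the gradient fields $X_l=\nabla_{g}I_l$ and $Y_j=\nabla_{g}D_j$), the vector field $X_0$ is a particular solution of the associated system, so that $g(X_0,\nabla_{g}I_l)=0$ for every $l\in\{1,\dots,k\}$ and $g(X_0,\nabla_{g}D_j)=h_j$ for every $j\in\{1,\dots,p\}$. Translating these back via $\mathcal{L}_{X_0}I_l=g(X_0,\nabla_{g}I_l)$ and $\mathcal{L}_{X_0}D_j=g(X_0,\nabla_{g}D_j)$ yields $\mathcal{L}_{X_0}I_l=0$ and $\mathcal{L}_{X_0}D_j=h_j$.

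Next I would combine these with the hypothesis on $X$, namely $\mathcal{L}_{X}I_l=0$ and $\mathcal{L}_{X}D_j=0$, using the additivity noted above. For the conserved quantities, $\mathcal{L}_{X+X_0}I_l=\mathcal{L}_{X}I_l+\mathcal{L}_{X_0}I_l=0+0=0$, so $X+X_0$ conserves each $I_l$. For the dissipated quantities, $\mathcal{L}_{X+X_0}D_j=\mathcal{L}_{X}D_j+\mathcal{L}_{X_0}D_j=0+h_j=h_j$, so $X+X_0$ dissipates each $D_j$ with the prescribed rate $h_j$. Since $p>0$, this exhibits $X+X_0$ as dissipative in the sense defined earlier in the section, which is exactly the conclusion.

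There is essentially no obstacle: the whole content has been front-loaded into Theorem \eqref{MTD}, which manufactures the explicit $X_0$ with precisely the required values of $g(X_0,\cdot)$. The only point requiring care is bookkeeping, namely checking that the gradient fields $\nabla_{g}I_l$ and $\nabla_{g}D_j$ are the vector fields playing the roles of $X_l$ and $Y_j$ in the earlier theorems, so that each non-homogeneous value $h_j$ is attached to the correct function $D_j$, and that functional independence of $I_1,\dots,I_k,D_1,\dots,D_p$ guarantees linear independence of the associated gradient fields, so that Theorem \eqref{MTD} indeed applies.
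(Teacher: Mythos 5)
Your proposal is correct and is essentially the paper's own argument in expanded form: the paper's one-line proof invokes Theorem \eqref{MTD} together with the observation that $X\in \mathfrak{X}[\nabla_{g} I_1,\dots, \nabla_{g} I_k, \nabla_{g} D_1, \dots, \nabla_{g} D_p]$, which is exactly your decomposition of $X+X_0$ into an element of the linear part plus the particular solution $X_0$. Spelling out the additivity of $\mathcal{L}_{(\cdot)}F$ and the identities $\mathcal{L}_{X_0}I_l=g(X_0,\nabla_g I_l)=0$, $\mathcal{L}_{X_0}D_j=g(X_0,\nabla_g D_j)=h_j$ is just making that same reduction explicit.
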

\begin{proof}
The proof is a consequence of Theorem \eqref{MTD} and of the fact that $$X\in \mathfrak{X}[\nabla_{g} I_1,\dots, \nabla_{g} I_k, \nabla_{g} D_1, \dots, \nabla_{g} D_p].$$
\end{proof}

A similar conclusion was obtained in \cite{BC} for the special case $p=1$. As in the case of Theorems \eqref{MT} and respectively \eqref{MTR}, let us now give some dynamical interpretations for the analogous of Remarks \eqref{Re2}, \eqref{Re3}, \eqref{Re1}, \eqref{Re01}.

\begin{remark}\label{MTDr}
\begin{itemize}
\item For $p=0$, the conclusion of Theorem \eqref{MTD} becomes as follows:

The distribution $$\mathfrak{X}[\nabla_{g} I_1,\dots,\nabla_{g} I_k]=\{X\in\mathfrak{X}(U)\mid \mathcal{L}_{X}I_1= \dots = \mathcal{L}_{X}I_k = 0\},$$ is locally generated by the set of vector fields
$$\left\{\star\left( \bigwedge_{i=1, i\neq a}^{n-k} Z_i \wedge\bigwedge_{l=1}^{k} \nabla_{g} I_l \right): a\in\{1,\dots, n-k\}\right\},$$
where the set of locally defined vector fields $$\{Z_1,\dots,Z_{n-k}, \nabla_{g} I_l,\dots,\nabla_{g} I_k\}$$ forms a moving frame.

\item For $p=1$, the conclusion of Theorem \eqref{MTD} becomes as follows:

The affine distribution
\begin{align*}
\mathfrak{A}[X_{0};\nabla_{g} I_1,\dots, \nabla_{g} I_k, \nabla_{g} D_1]&=\{X\in\mathfrak{X}(U)\mid  \mathcal{L}_{X}I_1=\dots= \mathcal{L}_{X}I_k=0,  \mathcal{L}_{X}D_1=h_1\}\\
&=X_0 + \mathfrak{X}[\nabla_{g} I_1,\dots, \nabla_{g} I_k, \nabla_{g} D_1],
\end{align*}
is locally generated by the set of vector fields
$$\left\{X_0\right\}\biguplus \left\{\star\left( \bigwedge_{i=1, i\neq a}^{n-(k+1)} Z_i \wedge \nabla_{g} D_1 \wedge \bigwedge_{l=1}^{k} \nabla_{g} I_l \right): a\in\{1,\dots, n-(k+1)\}\right\},$$
where
$$
X_{0}=\left\| \nabla_{g} D_1 \wedge\bigwedge_{j=1}^{k} \nabla_{g} I_j \right\|_{k+1}^{-2}\cdot(-1)^{n-1}h_1 \cdot \left(\star\left[\bigwedge_{l=1}^{k} \nabla_{g} I_l \wedge\star\left(\nabla_{g} D_1 \wedge\bigwedge_{l=1}^{k} \nabla_{g} I_l \right)\right]\right), 
$$
and respectively the set of locally defined vector fields $$\{Z_1,\dots,Z_{n-(k+1)}, \nabla_{g} I_1,\dots, \nabla_{g} I_k, \nabla_{g} D_1\}$$ forms a moving frame.
\end{itemize}
\end{remark}
The first part of Remark \eqref{MTDr} (namely for $p=0$) provides a set of local generators for the distribution given by the conservative vector fields $X\in\mathfrak{X}(U)$ admitting the set of (functionally independent) first integrals $I_1,\dots, I_k \in \mathcal{C}^{\infty}(U,\mathbb{R})$. The same expression for the vector field $X_0$ as in the second part of Remark \eqref{MTDr} (namely for $p=1$) it was also found in \cite{BC}. 

\bigskip
Moreover, if $p=0$ and $k=n-1$, then the conclusion of Remark \eqref{MTDr} becomes as follows:

The vector field $\star\left(\bigwedge_{l=1}^{k} \nabla_{g} I_l \right)$
generates locally the distribution of \textbf{completely integrable vector fields} $$\mathfrak{X}[\nabla_{g} I_1,\dots,\nabla_{g} I_{n-1}]=\{X\in\mathfrak{X}(U)\mid \mathcal{L}_{X}I_1= \dots = \mathcal{L}_{X}I_{n-1} = 0\}.$$ The same conclusion was obtained also in \cite{tudoran}. 
\bigskip

\begin{remark}
For $k=0$, the conclusion of Theorem \eqref{MTD} becomes as follows:

The affine distribution
\begin{align*}
\mathfrak{A}[X_{0};\nabla_{g} D_1,\dots, \nabla_{g} D_p]&=\{X\in\mathfrak{X}(U)\mid \mathcal{L}_{X}D_j =h_j, \quad 1\leq j\leq p\}\\
&=X_0 + \mathfrak{X}[\nabla_{g} D_1,\dots, \nabla_{g} D_p],
\end{align*}
is locally generated by the set of vector fields
$$\left\{X_0\right\}\biguplus \left\{\star\left( \bigwedge_{i=1, i\neq a}^{n-p} Z_i \wedge \bigwedge_{j=1}^{p} \nabla_{g} D_j \right): a\in\{1,\dots, n-p\}\right\},$$
where
$$
X_{0}=\left\|\bigwedge_{i=1}^{p} \nabla_{g} D_i \right\|_{p}^{-2}\cdot \sum_{i=1}^{p}(-1)^{n-i}h_i \cdot \left(\star\left[\bigwedge_{j=1, j\neq i}^{p} \nabla_{g} D_j \wedge\star\left(\bigwedge_{j=1}^{p} \nabla_{g} D_j \right)\right]\right), 
$$
and respectively the set of locally defined vector fields $$\{Z_1,\dots,Z_{n-p}, \nabla_{g} D_1,\dots,\nabla_{g} D_p\}$$ forms a moving frame.
\end{remark}

\begin{remark}
For $k+p=n-1$, the conclusion of Theorem \eqref{MTD} becomes as follows:

The affine distribution $$\mathfrak{A}[X_0; \nabla_{g} I_1,\dots, \nabla_{g} I_k, \nabla_{g} D_1, \dots, \nabla_{g} D_p]$$ is locally generated by the set of vector fields $$\left\{X_0\right\}\biguplus \left\{\star\left(\bigwedge_{j=1}^{p} \nabla_{g} D_j\wedge\bigwedge_{l=1}^{k} \nabla_{g} I_l \right)\right\}.$$
\end{remark}

\begin{remark}
For $k+p=n$, the conclusion of Theorem \eqref{MTD} reduces to:
\begin{align*}
\mathfrak{A}[X_0; \nabla_{g} I_1,\dots, \nabla_{g} I_k, &\nabla_{g} D_1, \dots, \nabla_{g} D_p]\\
&=\{X\in\mathfrak{X}(U)\mid \mathcal{L}_{X}I_i =0, \mathcal{L}_{X}D_j =h_j, 1\leq i\leq k, 1\leq j\leq p\} \\
&=\{X_0\},
\end{align*}
where 
\begin{align*}
X_0 =\left\|\bigwedge_{i=1}^{p} \nabla_{g} D_i \wedge \bigwedge_{l=1}^{k} \nabla_{g} I_l\right\|_{n}^{-2}\cdot \sum_{i=1}^{p}(-1)^{n-i}h_i \Theta_i,
\end{align*}
\begin{align*}
\Theta_i =\star\left[\bigwedge_{j=1, j\neq i}^{p} \nabla_{g} D_j \wedge\bigwedge_{l=1}^{k} \nabla_{g} I_l \wedge\star\left(\bigwedge_{j=1}^{p} \nabla_{g} D_j\wedge \bigwedge_{l=1}^{k} \nabla_{g} I_l\right)\right].
\end{align*}
\end{remark}

\subsection*{Acknowledgment}
This work was supported by a grant of the Romanian National Authority for Scientific Research, CNCS-UEFISCDI, project number PN-II-RU-TE-2011-3-0103.

\bigskip
\bigskip

\noindent {\sc R.M. Tudoran}\\
West University of Timi\c soara\\
Faculty of Mathematics and Computer Science\\
Department of Mathematics\\
Blvd. Vasile P\^arvan, No. 4\\
300223 - Timi\c soara, Rom\^ania.\\
E-mail: {\sf tudoran@math.uvt.ro}\\
\medskip

\end{document}